\journal{J. of Geometry and Physics}
\newtheorem{propos}{Proposition}
\newproof{proof}{Proof}
\newdefinition{remark}{Remark}
\newtheorem{theorem}{Theorem}
\newtheorem{lemma}{Lemma}
\newcommand {\ds}{\displaystyle}
\def\sgrad{\operatorname{sgrad}}
\def\rank{\operatorname{rank}}
\def\trace{\operatorname{trace}}
\begin{document}

\begin{frontmatter}
\author[fin]{P. E. Ryabov\corref{cor1}}
\ead{orelryabov@mail.ru}

\address[fin]{Financial University, Leningradsky Avenue, 49, Moscow, 125993, Russia}

\cortext[cor1]{Corresponding author}

\title{New invariant relations for the generalized two-field gyrostat\tnoteref{t1}}
\tnotetext[t1]{This work is partially supported by the grants of RFBR and Volgograd Region Authorities No. 14-01-00119 and 13-01-97025.}

\begin{abstract}
In the paper, we consider a completely integrable Hamiltonian system with three degrees of freedom found by V.V.\,Sokolov and A.V.\,Tsiganov. This system is known as the generalized two-field gyrostat. For the case of only gyroscopic forces present, we find new invariant four-dimensional submanifolds such that the induced dynamical systems are almost everywhere Hamiltonian with two degrees of freedom.
\end{abstract}

\begin{keyword}
completely integrable Hamiltonian systems\sep spectral curve \sep invariant submanifold

\MSC 70E17 \sep 70G40 \sep 70H06 \sep 70E40

\end{keyword}

\end{frontmatter}

\tableofcontents

\section{Introduction}

Motion of the generalized two-field gyrostat is described by the following
system of differential equations:
\begin{equation}\label{eq_1}
\begin{array}{l}
 \ds{\dot{\boldsymbol M}={\boldsymbol M}\times\frac{\partial H}{\partial{\boldsymbol
M}}+ {\boldsymbol\alpha}\times\frac{\partial H}{\partial{\boldsymbol\alpha}}+
{\boldsymbol\beta}\times\frac{\partial H}{\partial{\boldsymbol\beta}},}\\[5mm]
\ds{\dot{\boldsymbol \alpha}={\boldsymbol\alpha}\times\frac{\partial
H}{\partial{\boldsymbol M}},\quad \dot{\boldsymbol
\beta}={\boldsymbol\beta}\times\frac{\partial H}{\partial{\boldsymbol M}},}
\end{array}
\end{equation}
with the Hamiltonian function \cite{SokTsi02}
\begin{equation}\label{eq_2}
\begin{array}{l}
H_{\varepsilon_1,\varepsilon_2}= M_1^2+M_2^2+2M_3^2+2\lambda M_3-2\varepsilon_2(\alpha_1+\beta_2)\\[3mm]
\qquad +2\varepsilon_1(M_2\alpha_3-M_3\alpha_2+M_3\beta_1-M_1\beta_3).
\end{array}
\end{equation}
Here ${\boldsymbol M}, {\boldsymbol\alpha}, {\boldsymbol\beta}$ stand for the total kinetic momentum and the intensities of the two forces considered in the moving frame formed by the principal inertia axes of the body. The gyrostatic momentum is directed along the dynamic symmetry axis and its axial component is denoted by $\lambda$. The parameters $\varepsilon_1,\varepsilon_2$ are called the deformation parameters since their zero values define important partial cases and the connections of the problem with some previously known integrable cases.

Treating $\mathbb{R}^9=\{({\boldsymbol M}, {\boldsymbol\alpha}, {\boldsymbol\beta})\}$ as the Lie coalgebra $e(3,2)^*$ we obtain the Lie--Poisson bracket
\begin{equation}\label{eq_3}
\begin{array}{l}\{M_i,M_j\}=\varepsilon_{ijk}M_k, \quad \{M_i,\alpha_j\}=\varepsilon_{ijk}\alpha_k,\quad
\{M_i,\beta_j\}=\varepsilon_{ijk}\beta_k,\\[5mm]
\{\alpha_i,\alpha_j\}=0, \quad \{\alpha_i,\beta_j\}=0, \quad \{\beta_i,\beta_j\}=0, \\[5mm]
\varepsilon_{ijk}=\frac{1}{2}(i-j)(j-k)(k-i),\quad 1\leqslant i,j,k\leqslant 3.
\end{array}
\end{equation}
With respect to this bracket, system (\ref{eq_1}) can be represented in the Hamiltonian form
\begin{equation*}
\dot x=\{H_{\varepsilon_1,\varepsilon_2},x\}
\end{equation*}
for any coordinate function $x$ on $\mathbb{R}^9$.

Note that the Casimir functions of the bracket (\ref{eq_3}) are ${\boldsymbol\alpha}^2$,
${\boldsymbol\alpha}\cdot{\boldsymbol\beta}$, and ${\boldsymbol\beta}^2$.
Therefore we define the phase space $\cal P$ of system \eqref{eq_1} as a common
level of these functions
\begin{equation}\label{eq_4}
\boldsymbol\alpha^2=a^2,\quad \boldsymbol\beta^2=b^2,\quad
{\boldsymbol\alpha}\cdot{\boldsymbol\beta}=c, \quad (0<b<a, |c|<ab).
\end{equation}

In \cite{SokTsi02}, for system \eqref{eq_1} with the Hamiltonian function \eqref{eq_2},
V.\,V.\,Sokolov and A.\,V.\,Tsiganov gave a Lax representation with a spectral parameter
and thereby proved Liouville complete integrability of this system. This Lax representation
generalizes the $L$-$A$ pair for the Kowalevski gyrostat in a double field found by A.\,G.\,Reyman and
M.\,A.\,Semenov-Tian-Shansky \cite{ReySem1987}.

In this paper, we write out the general integrals in involution for the Hamiltonian function \eqref{eq_2}, obtain the explicit equation of the algebraic curve associated with the Lax pair of Sokolov and Tsiganov and show how to use this curve to construct some special surfaces generating  so-called critical subsystems in the considered integrable system.
After that, we mostly restrict ourselves to the Hamiltonian function without linear potential, i.e., to the case $\varepsilon_2=0$. For the critical subsystems arising in this case we obtain the description of the invariant submanifolds in terms of invariant relations and reveal some important features of the critical subsystems, such as degeneration of the induced symplectic structure and the types of critical points with respect to the initial system.

\section{The notion of a critical subsystem}

The problem of motion of the generalized two-field gyrostat restricted the phase space (\ref{eq_4}) is an integrable Hamiltonian system with three degrees of freedom.

In the study of global analytical and topological features of systems with three degrees of freedom, critical subsystems are of special interest. They form the critical sets of integral mappings, define the stratification of the phase space and the bifurcations of the integral manifolds.
The notion of a critical subsystem was formed in the works by M.\,P.\,Kharlamov \cite{Kh2005,Kh2009}. In this section, we follow the general approach described in \cite{KhRCDNew}. The idea of a critical subsystem is as follows.

Let $K$ and $G$ be two integrals in involution of a Liouville integrable Hamiltonian system with three degrees of freedom on $\cal P$ ($\dim {\cal P}=6$) with a Hamiltonian function $H$. In what follows, having a general first integral denoted by an uppercase letter we denote by the corresponding lowercase letter its particular value (the integral constant).
We define
\begin{equation*}
{\cal F}: {\cal P}\to {\mathbb R}^3
\end{equation*}
by ${\cal F}(x) =\{g = G(x), k = K(x), h = H(x)\}$. The mapping $\cal F$ is called the \textit{momentum mapping}.
By $\cal C$ we denote the set of all critical points of $\cal F$, i.e., the set of points $x$ such that $\rank d{\cal F}(x) < 3$. The set of critical values $\Sigma = {\cal F}({\cal C}) \subset {\mathbb R}^3$ is called the \textit{bifurcation diagram}. Normally, $\Sigma$ is a stratified 2-manifold.

Let
\begin{equation}\label{eq_5}
{\cal L}(h,k,g)=0
\end{equation}
be the equation of a two-dimensional surface $\Pi_{\cal L}$ that contains one of the smooth two-dimensional leaves of $\Sigma$. We call such surfaces the bifurcation surfaces. Thus, the closure of the 2-skeleton of the bifurcation diagram is a subset of the union of the bifurcation surfaces.
Introduce the function
\begin{equation}\label{eq_6}
\Phi_{\cal L}={\cal L}\circ{\cal F}: {\cal P}\to {\mathbb R}.
\end{equation}
Then the corresponding \textit{critical subsystem} ${\cal M}_{\cal L}$ is defined as the closure of the set of critical points of rank $2$ on the zero level of the integral $\Phi_{\cal L}$. Obviously, ${\cal M}_{\cal L}$ is an invariant subset in $\cal P$, consisting of critical points of the mapping ${\cal F}$. The critical subsystem ${\cal M}_{\cal L}$ can be described by the following system of equations:
\begin{equation*}
\Phi_{\cal L}=0,\quad d\Phi_{\cal L}=0.
\end{equation*}
Almost everywhere on ${\cal M}_{\cal L}$ this system has rank 2, thus, locally ${\cal M}_{\cal L}$ is defined by two equations.

Recall a well-known fact from symplectic geometry \cite{Fom1995}.
\begin{lemma}\label{lem1_1}
Suppose a submanifold ${\cal M}$ of a symplectic manifold ${\cal P}$ is defined by the
system of independent equations
\begin{equation}\label{eq_7}
f_1 = 0,\quad f_2 = 0.
\end{equation}
Then the $2$-form on ${\cal M}$ induced by the symplectic structure of ${\cal P}$ degenerates exactly
on the set
\begin{equation*}
\{f_1, f_2\} = 0.
\end{equation*}
\end{lemma}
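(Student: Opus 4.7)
The plan is to compute the kernel of the restricted form $\omega|_{\mathcal{M}}$ pointwise and show it is nontrivial exactly where $\{f_1,f_2\}$ vanishes. Throughout, I denote by $\omega$ the symplectic form on $\mathcal{P}$, by $X_f = \sgrad f$ the Hamiltonian vector field defined by $\omega(X_f,\cdot)=df$, and I use the standard identity $df(X_g)=\omega(X_f,X_g)=\{f,g\}$.

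First I would identify the tangent space. Since $f_1=0$, $f_2=0$ are independent defining equations, the differentials $df_1(p), df_2(p)$ are linearly independent on $\mathcal{M}$, hence
\begin{equation*}
T_p\mathcal{M} = \ker df_1(p)\cap \ker df_2(p) = \{\, v\in T_p\mathcal{P} : \omega(X_{f_1},v)=\omega(X_{f_2},v)=0 \,\} = \Pi^{\,\omega},
\end{equation*}
where $\Pi=\operatorname{span}(X_{f_1},X_{f_2})$ and $\Pi^{\,\omega}$ denotes its skew-orthogonal complement. Note $\Pi$ is two-dimensional because $df_1,df_2$ are independent and $\omega$ is nondegenerate.

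Next I would compute the kernel of the induced form. By definition, $\ker(\omega|_{\mathcal{M}})_p = T_p\mathcal{M}\cap(T_p\mathcal{M})^{\omega}$. Applying the reflexivity $(W^{\omega})^{\omega}=W$ to $W=\Pi$, one has $(T_p\mathcal{M})^{\omega}=(\Pi^{\,\omega})^{\omega}=\Pi$, so
\begin{equation*}
\ker(\omega|_{\mathcal{M}})_p = \Pi\cap T_p\mathcal{M}.
\end{equation*}
A vector $v=a X_{f_1}+b X_{f_2}\in\Pi$ belongs to $T_p\mathcal{M}$ iff $df_i(v)=0$ for $i=1,2$. Using $df_i(X_{f_j})=\{f_i,f_j\}$ and antisymmetry $\{f_i,f_i\}=0$, this yields the linear system
\begin{equation*}
b\,\{f_1,f_2\}=0,\qquad -a\,\{f_1,f_2\}=0,
\end{equation*}
which admits a nonzero solution $(a,b)$ if and only if $\{f_1,f_2\}(p)=0$. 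This is exactly the claim.

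The only delicate point—and the one I would double-check—is the hypothesis that $\Pi$ is genuinely two-dimensional at every point of $\mathcal{M}$, so that the $2\times 2$ matrix above has the stated simple form; this follows from the independence of $df_1,df_2$ via the nondegeneracy of $\omega$, and is already built into the assumption that the $f_i$ cut out a submanifold. No deep computation is required; the proof is essentially a packaging of the standard facts about symplectic orthogonals plus the correspondence between $\omega$ and the Poisson bracket.
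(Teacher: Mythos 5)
Your argument is correct and complete: identifying $T_p\mathcal{M}$ with the skew-orthogonal complement of $\Pi=\operatorname{span}(X_{f_1},X_{f_2})$, using $(\Pi^{\omega})^{\omega}=\Pi$ to get $\ker(\omega|_{\mathcal{M}})_p=\Pi\cap T_p\mathcal{M}$, and reading off degeneracy from the $2\times2$ Gram matrix is exactly the standard proof of this fact. The paper itself supplies no proof --- it states the lemma as a well-known result and cites Fomenko's textbook --- so there is nothing to compare against; your write-up simply fills in the omitted standard argument, and does so without gaps.
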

Since critical subsystems are usually described by the systems of the form \eqref{eq_7},
the induced symplectic structure may degenerate on a set of codimension $1$. In this
case the subsystem is said to be \textit{almost Hamiltonian}.

Denote by $X_H$ the considered Hamiltonian vector field on $\cal P$. We suppose that the integrable system $X_H$ is non-degenerate in the Arnold sense. The following lemma \cite{Kh2007} gives a useful tool to verify whether the common level of two functions \eqref{eq_7} consists of critical points of the momentum mapping $\cal F$.

\begin{lemma}\label{lem1_2}
Consider a system of equations \eqref{eq_7} in some domain $W$ open in ${\cal P}$. Let $\cal F$ be the momentum mapping of the Hamiltonian vector field $X_H$ on $\cal P$. Suppose that $\cal M \subset W$ is defined by system \eqref{eq_7}. We also suppose that

(i) $f_1$, $f_2$ are smooth functions that are independent on $\cal M$;

(ii) $X_H f_1 = 0$, $X_H f_2 = 0$ on $\cal M$;

(iii) almost everywhere on $\cal M$ the Poisson bracket $\{f_1, f_2\}\ne 0$.

\noindent Then $\cal M$ consists of critical points of the mapping $\cal F$.
\end{lemma}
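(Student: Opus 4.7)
The plan is to argue by contradiction. By assumption (iii) the open set $\mathcal{M}^\circ=\{x\in\mathcal{M}:\{f_1,f_2\}(x)\neq 0\}$ is dense in $\mathcal{M}$, and since the critical set of $\mathcal{F}$ is closed, it suffices to show every point of $\mathcal{M}^\circ$ is critical for $\mathcal{F}$. So suppose instead that there exists $x_0\in\mathcal{M}^\circ$ which is regular for $\mathcal{F}$. Shrinking if necessary, I may additionally arrange that the Liouville torus through $x_0$ is non-resonant, because the Arnold non-degeneracy of $X_H$ stated immediately before the lemma ensures non-resonant tori are dense in any neighbourhood of a regular point.

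The first substantive step uses condition (ii): the field $X_H$ is tangent to $\mathcal{M}$ along $\mathcal{M}$, so the entire $X_H$-orbit through $x_0$ remains in $\mathcal{M}$. Because $x_0$ is regular for $\mathcal{F}$, the component $\mathbb{T}$ of the fibre of $\mathcal{F}$ through $x_0$ is a smooth three-dimensional Liouville torus; since $\dim\mathcal{P}=6$ and $G,K,H$ Poisson-commute, $\mathbb{T}$ is Lagrangian, with $T_y\mathbb{T}=\mathrm{span}(X_G,X_K,X_H)|_y$ at each $y\in\mathbb{T}$. Non-resonance makes the $X_H$-orbit through $x_0$ dense in $\mathbb{T}$; combined with the closedness of $\mathcal{M}$, this forces $\mathbb{T}\subset\mathcal{M}$ and hence $T_{x_0}\mathbb{T}\subset T_{x_0}\mathcal{M}$.

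A symplectic-orthogonal computation now produces the contradiction. Since $\mathcal{M}$ is cut out by the independent functions $f_1,f_2$, the orthogonal is $(T_{x_0}\mathcal{M})^\perp=\mathrm{span}(X_{f_1}(x_0),X_{f_2}(x_0))$. The Lagrangian property of $\mathbb{T}$ gives $(T_{x_0}\mathbb{T})^\perp=T_{x_0}\mathbb{T}$, and dualising the inclusion above yields $(T_{x_0}\mathcal{M})^\perp\subset T_{x_0}\mathbb{T}$. Thus $X_{f_1}(x_0)$ and $X_{f_2}(x_0)$ both lie in the span of the pairwise Poisson-commuting fields $X_G,X_K,X_H$, so $\{f_1,f_2\}(x_0)=\omega(X_{f_1},X_{f_2})(x_0)=0$, contradicting $x_0\in\mathcal{M}^\circ$.

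The most delicate step is the density argument: one must certify that inside the hypothetical open set of regular non-critical points of $\mathcal{F}$ in $\mathcal{M}^\circ$ there is genuinely a point lying on a non-resonant Liouville torus, so that density of the $X_H$-orbit is really available. This is precisely where the global Arnold non-degeneracy of $X_H$ assumed above is essential; once density is in hand, the rest of the argument reduces to a linear-algebra consequence of the Lagrangian dimension count and the commutativity of $G,K,H$.
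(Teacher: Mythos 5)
The paper does not actually prove this lemma: it is imported verbatim from \cite{Kh2007} with a citation, so there is no in-paper argument to measure your proposal against. Judged on its own, your final step is sound and is indeed the right mechanism: if a regular Lagrangian $3$-torus $\mathbb{T}$ lies inside $\mathcal{M}$ at a point where $f_1,f_2$ are independent, then $(T_{x_0}\mathcal{M})^{\perp}=\mathrm{span}(X_{f_1},X_{f_2})\subset(T_{x_0}\mathbb{T})^{\perp}=T_{x_0}\mathbb{T}$, and the isotropy of $T_{x_0}\mathbb{T}$ forces $\{f_1,f_2\}(x_0)=0$. (Equivalently: a $3$-dimensional isotropic subspace cannot sit inside the $4$-dimensional symplectic space $T_{x_0}\mathcal{M}$.)

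The genuine gap is exactly the step you flag and then do not close: producing a point of $\mathcal{M}^{\circ}\cap\mathrm{Reg}(\mathcal{F})$ lying on a non-resonant torus. Arnold non-degeneracy gives density of non-resonant tori in open subsets of $\mathcal{P}$, but $\mathcal{M}$ is a nowhere-dense, measure-zero $4$-dimensional submanifold, so density in $\mathcal{P}$ says nothing about whether $\mathcal{M}$ meets any non-resonant torus: a priori $\mathcal{F}(\mathcal{M}^{\circ}\cap\mathrm{Reg}(\mathcal{F}))$ could be a $2$-dimensional set entirely contained in the (dense, countable union of hypersurfaces) resonant locus. To invoke density you would need $\mathcal{F}|_{\mathcal{M}}$ to have rank $3$ somewhere on the putative regular set, so that its image has nonempty interior --- but that is not established, and it is precisely the sort of statement the lemma is designed to refute; the complementary case $\operatorname{rank}d(\mathcal{F}|_{\mathcal{M}})\leqslant 2$, where only resonant tori (with lower-dimensional orbit closures that fit isotropically into $T\mathcal{M}$ without contradiction) are available, is left untreated. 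Two smaller points also need attention: the orbit-closure argument requires the whole torus $\mathbb{T}$ to stay inside the domain $W$ where $\mathcal{M}$ is defined and relatively closed, which does not follow from $x_0\in W$; and the connected component of a fibre through a regular point need not consist entirely of regular points, so calling it a Liouville torus requires a further restriction to the open set of regular fibres.
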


In what follows, we find new critical subsystems of the generalised two-field gyrostat (invariant almost everywhere four-dimensional submanifolds with the induced dynamical systems being almost everywhere Hamiltonian systems with two degrees of freedom). As an application, we obtain the type (elliptic or hyperbolic) of the points of these critical subsystems considered as critical points of the initial system on the whole phase space ${\cal P}$.

\section{How to find the equations of the bifurcation surfaces?}

In this section we show that the equations of surfaces of the type \eqref{eq_5} (implicit or parametric) could be obtained as the equations for the discriminant sets of some polynomials corresponding to singularities of the algebraic curve ${\cal E}(z,\zeta)$ associated with the Lax representation.

For the Hamiltonian function \eqref{eq_2}, we represent the additional integrals
$K_{\varepsilon_1,\varepsilon_2}$ and $G_{\varepsilon_1,\varepsilon_2}$ as integrals that depend on two deformation parameters
$\varepsilon_1,\varepsilon_2$:
\begin{equation*}
\begin{array}{l}
K_{\varepsilon_1,\varepsilon_2}=Z_1^2+Z_2^2-\lambda[(M_3+\lambda)(M_1^2+M_2^2)+2\varepsilon_2(\alpha_3M_1+\beta_3M_2)]\\[3mm]
\qquad+\lambda\varepsilon_1^2({\boldsymbol\alpha}^2+{\boldsymbol\beta}^2)M_3+2\lambda\varepsilon_1[\alpha_2M_1^2-\beta_1M_2^2-(\alpha_1-\beta_2)M_1M_2]
-2\lambda\varepsilon_1^2\omega_\gamma,\\[3mm]
G_{\varepsilon_1,\varepsilon_2}=\omega_\alpha^2+\omega_\beta^2+2(M_3+\lambda)\omega_\gamma-
2\varepsilon_2({\boldsymbol\alpha}^2\beta_2+{\boldsymbol\beta}^2\alpha_1)\\[3mm]
\qquad +2\varepsilon_1[{\boldsymbol\beta}^2(M_2\alpha_3-M_3\alpha_2)-
{\boldsymbol\alpha}^2(M_1\beta_3-M_3\beta_1)]\\[3mm]
\qquad+2({\boldsymbol\alpha}\cdot{\boldsymbol\beta})[\varepsilon_2(\alpha_2+\beta_1)+\varepsilon_1(\alpha_3M_1-\alpha_1M_3+\beta_2M_3-\beta_3M_2)].
\end{array}
\end{equation*}

Here we use the following notation:
\begin{equation*}
\begin{array}{l}
Z_1=\frac{1}{2}(M_1^2-M_2^2)+\varepsilon_2(\alpha_1-\beta_2)\\[3mm]
\qquad+\varepsilon_1[M_3(\alpha_2+\beta_1)-M_2\alpha_3-M_1\beta_3]+
\frac{1}{2}\varepsilon_1^2({\boldsymbol\beta}^2-{\boldsymbol\alpha}^2),\\[3mm]
Z_2=M_1M_2+\varepsilon_2(\alpha_2+\beta_1)-\varepsilon_1[M_3(\alpha_1-\beta_2)+\beta_3M_2-\alpha_3M_1]-\varepsilon_1^2(
{\boldsymbol\alpha}\cdot{\boldsymbol\beta}),\\[3mm]
\omega_\alpha=\alpha_1M_1+\alpha_2M_2+\alpha_3M_3,\quad
\omega_\beta=\beta_1M_1+\beta_2M_2+\beta_3M_3,\\[3mm]
\omega_\gamma=M_1(\alpha_2\beta_3-\beta_2\alpha_3)+M_2(\alpha_3\beta_1-\alpha_1\beta_3)+M_3(\alpha_1\beta_2-\alpha_2\beta_1).
\end{array}
\end{equation*}

In the special case $\varepsilon_1=0,\varepsilon_2=1$, we get the integrals for the
problem of the Kowalevski gyrostat motion in two homogeneous fields
\cite{ReySem1987,BobReySem1989}.

Let for brevity $H=H_{\varepsilon_1,\varepsilon_2}$, $K=K_{\varepsilon_1,\varepsilon_2}$, and $G=G_{\varepsilon_1,\varepsilon_2}$. For the Lax pair of Sokolov and Tsiganov \cite{SokTsi02}, the spectral curve ${\cal E}(z,\zeta)$ has the equation
\begin{equation}\label{eq_8}
{\cal E}(z,\zeta)\,:\, \, d_4\zeta^4+d_2\zeta^2+d_0=0,
\end{equation}
where
\begin{equation*}
\begin{array}{l}
d_4=-z^4-\varepsilon_1^2({\boldsymbol\alpha}^2+{\boldsymbol\beta}^2)z^2-
\varepsilon_1^4[{\boldsymbol\alpha}^2{\boldsymbol\beta}^2-({\boldsymbol\alpha}\cdot{\boldsymbol\beta})^2],\\[3mm]
d_2=2z^6+[\varepsilon_1^2({\boldsymbol\alpha}^2+{\boldsymbol\beta}^2)-h-\lambda^2]z^4+
[\varepsilon_2^2({\boldsymbol\alpha}^2+{\boldsymbol\beta}^2)-
\varepsilon_1^2g]z^2\\[3mm]
\qquad+2\varepsilon_1^2\varepsilon_2^2[{\boldsymbol\alpha}^2{\boldsymbol\beta}^2-({\boldsymbol\alpha}\cdot{\boldsymbol\beta})^2],\\[3mm]
d_0=-z^8+hz^6+f_{\varepsilon_1,\varepsilon_2}z^4+\varepsilon_2^2gz^2-
\varepsilon_2^4[{\boldsymbol\alpha}^2{\boldsymbol\beta}^2-({\boldsymbol\alpha}\cdot{\boldsymbol\beta})^2].
\end{array}
\end{equation*}
The most complicated coefficient $f_{\varepsilon_1,\varepsilon_2}$ at $z^4$ in $d_0$ is expressed in terms of the integral constants $h$, $k$, and $g$ as follows:
\begin{equation*}
f_{\varepsilon_1,\varepsilon_2}=\varepsilon_1^2g+k-\varepsilon_1^4({\boldsymbol\alpha}\cdot{\boldsymbol\beta})^2-
\frac{1}{4}[h^2+2\varepsilon_1^2({\boldsymbol\alpha}^2+{\boldsymbol\beta}^2)h+\varepsilon_1^4({\boldsymbol\alpha}^2-{\boldsymbol\beta}^2)^2]-
\varepsilon_2^2({\boldsymbol\alpha}^2+{\boldsymbol\beta}^2).
\end{equation*}
The curve \eqref{eq_8} can be considered as zero level of the mapping ${\cal
E}:\overline{{\mathbb C}}\times\overline{{\mathbb C}}\to\overline{{\mathbb C}}$. Denote by
$\widetilde{\Sigma}\subset{\mathbb R}^3(g,k,h)$ the set of such values of the integral constants for which zero is a critical value
of the mapping $\cal E$.

The experience of studying Hamiltonian systems \cite{Kh2005}, \cite{Kh_Ryab1997} shows that
\begin{equation*}
\Sigma\subset\widetilde{\Sigma}
\end{equation*}
and the bifurcation set $\Sigma$ is cut out of $\widetilde{\Sigma}$ by the requirement that the initial variables are real. In turn, the set $\widetilde{\Sigma}$ at finite points of $\overline{{\mathbb C}}\times\overline{{\mathbb C}}$ is defined by the system
\begin{equation}\label{eq_9}
{\cal E}(z,\zeta)=0,\quad \frac{\partial}{\partial z}{\cal E}(z,\zeta)=0,\quad
\frac{\partial}{\partial \zeta}{\cal E}(z,\zeta)=0.
\end{equation}
For the algebraic curve \eqref{eq_8} system \eqref{eq_9} leads to two different possibilities, either
\begin{equation}\label{eq_10}
d_0=0 \quad \mathrm{and} \quad \frac{\partial}{\partial z} d_0 =0,
\end{equation}
or
\begin{equation}\label{eq_11}
{\cal D}=0 \quad \mathrm{and} \quad \frac{\partial}{\partial z} {\cal D}=0,
\end{equation}
where ${\cal D}=d_2^2-4d_4d_0$.
Obviously, these systems define in ${\mathbb R}^3(g,k,h)$ the surfaces of multiple roots (the discriminant surfaces) of the polynomials $d_0=d_0(z)$ and ${\cal D}={\cal D}(z)$.

From \eqref{eq_10} putting $t=z^2$, we can represent the first discriminant surface in the parametric form
\begin{equation}\label{eq_12}
\begin{array}{l}
\ds{g(t)=\frac{ht^2-2t^3}{\varepsilon_2^2}+\frac{2\varepsilon_2^2(a^2b^2-c^2)}{t},}\\[3mm]
\ds{k(t)=3t^2-2ht+\frac{\varepsilon_1^2(2t^3-ht^2)}{\varepsilon_2^2}+\frac{\varepsilon_2^2(c^2-a^2b^2)(2\varepsilon_1^2t+\varepsilon_2^2)}{t^2}}\\[3mm]
\qquad \ds{+\frac{1}{4}\{h^2+2\varepsilon_1^2(a^2+b^2)h+\varepsilon_1^4[(a^2-b^2)^2+4c^2]+4\varepsilon_2^2(a^2+b^2)\}}.
\end{array}
\end{equation}
Eliminating $t$ and putting $\varepsilon_2=0$, we see that in this particular case the parametric surface \eqref{eq_12} splits into three surfaces
\begin{eqnarray}
& & \varepsilon_1^2g+k-\varepsilon_1^4c^2-
\frac{1}{4}[h^2+2\varepsilon_1^2(a^2+b^2)h+\varepsilon_1^4(a^2-b^2)^2]=0,\label{eq_13}\\[3mm]
& & 4k-\varepsilon_1^2\{2h(a^2+b^2)-4g+\varepsilon_1^2[4c^2+(a^2-b^2)^2]\}=0,\label{eq_14}
\end{eqnarray}
and
\begin{equation}\label{eq_15}
\begin{array}{l}
4(c^2-a^2b^2)k+[abh+g+\varepsilon_1^2ab(a+b)^2][abh-g+\varepsilon_1^2ab(a-b)^2]-4\varepsilon_1^4c^4\\[3mm]
\quad -\{h^2+2\varepsilon_1^2[h(a^2+b^2)-2g]+\varepsilon_1^4[(a^2-b^2)^2-4a^2b^2]\}c^2=0.
\end{array}
\end{equation}

Consider the second system \eqref{eq_11} in the domain $z\in \overline{{\mathbb C}}\setminus 0$ and introduce $s$ as a root of the equation
\begin{equation*}
s^2-2\{\varepsilon_1^2[\varepsilon_1^2(a^2+b^2)+h+\lambda^2]+2\varepsilon_2^2\}s+4\lambda^2\varepsilon_1^6(a^2+b^2)+8\lambda^2\varepsilon_1^4z^2=0.
\end{equation*}
Then we get the parametric equations for $h,k,g$
\begin{equation}\label{eq_16}
\begin{array}{l}
\ds{g(s)=-\frac{\varepsilon_1^4\lambda^2[(a^2-b^2)^2+4c^2]}{s} -
\frac{\{\varepsilon_1^2[\varepsilon_1^2(a^2+b^2)+h+\lambda^2]+2\varepsilon_2^2\}}{8\varepsilon_1^8\lambda^2}s^2}\\[3mm]
\ds{\qquad+\frac{1}{16}\frac{s^3}{\lambda^2\varepsilon_1^8}+\frac{1}{2}\{\varepsilon_1^2[(a^2-b^2)^2+4c^2]+(a^2+b^2)(h+\lambda^2)\},}\\[5mm]
\ds{k(s)=\frac{\varepsilon_1^8\lambda^4[(a^2-b^2)^2+4c^2]}{s^2}+
 \frac{\{\varepsilon_1^2[\varepsilon_1^2(a^2+b^2)+h+\lambda^2]+2\varepsilon_2^2\}}{2\varepsilon_1^4}s}\\[3mm]
\ds{\qquad-\frac{3}{16}\frac{s^2}{\varepsilon_1^4}-\frac{1}{2}\lambda^2\bigl[2\varepsilon_1^2(a^2+b^2)+h+\frac{\lambda^2}{2}\bigr].}
\end{array}
\end{equation}

If we eliminate $s$ from \eqref{eq_16} and take the value of the gyrostatic momentum $\lambda=0$ keeping arbitrary values of the deformation parameters $\varepsilon_1$ and $\varepsilon_2$, then, similar to the previous case, the parametric surface \eqref{eq_16} splits into three surfaces
\begin{eqnarray}
& & k=0,\label{eq_17}\\[3mm]
& & \{\varepsilon_1^2[\varepsilon_1^2(a^2+b^2)+h]+2\varepsilon_2^2\}^2-4\varepsilon_1^4k=0,\label{eq_18}
\end{eqnarray}
and
\begin{equation*}
\{\varepsilon_1^2[(a^2-b^2)^2+4c^2]+(a^2+b^2)h-2g\}^2-4[(a^2-b^2)^2+4c^2]k=0.
\end{equation*}

In what follows we study the critical subsystems generated by the surfaces $\Pi_{{\cal L}_1}$ and $\Pi_{{\cal L}_2}$ defined by equations \eqref{eq_13} and \eqref{eq_14}, respectively.

\section{New invariant relations}
From now on we consider the integrals $H=H_{\varepsilon_1,0}$, $K=K_{\varepsilon_1,0}$, and $G=G_{\varepsilon_1,0}$ for which the deformation parameter $\varepsilon_2$ vanishes.

Consider the functions
\begin{equation*}
{\cal L}_1(h,k,g)=\varepsilon_1^2g+k-\varepsilon_1^4c^2-
\frac{1}{4}[h^2+2\varepsilon_1^2(a^2+b^2)h+\varepsilon_1^4(a^2-b^2)^2]
\end{equation*}
and
\begin{equation*}
{\cal L}_2(h,k,g)=4k-\varepsilon_1^2\{2h(a^2+b^2)-4g+\varepsilon_1^2[4c^2+(a^2-b^2)^2]\}.
\end{equation*}
These functions specify equations \eqref{eq_13} and \eqref{eq_14} of two-dimensional surfaces $\Pi_{{\cal L}_1}$ and $\Pi_{{\cal L}_2}$ of the type \eqref{eq_5}. The choice of functions ${\cal L}_i(h,k,g)$ is motivated above by singularities of the algebraic curve.

As in \eqref{eq_6}, we determine the corresponding integrals $\Phi_{{\cal L}_1}$ and $\Phi_{{\cal L}_2}$ by the formulas
\begin{equation*}
\Phi_{{\cal L}_1}=\varepsilon_1^2G+K-\varepsilon_1^4({\boldsymbol\alpha}\cdot{\boldsymbol\beta})^2-
\frac{1}{4}[H^2+2\varepsilon_1^2({\boldsymbol\alpha}^2+{\boldsymbol\beta}^2)H+\varepsilon_1^4({\boldsymbol\alpha}^2-{\boldsymbol\beta}^2)^2],
\end{equation*}
and
\begin{equation*}
\Phi_{{\cal L}_2}=4K-\varepsilon_1^2\{2H({\boldsymbol\alpha}^2+{\boldsymbol\beta}^2)-4G+\varepsilon_1^2[4({\boldsymbol\alpha}\cdot{\boldsymbol\beta})^2+
({\boldsymbol\alpha}^2-{\boldsymbol\beta}^2)^2]\}.
\end{equation*}

\begin{propos}
On $\cal P$, the integral $\Phi_{{\cal L}_1}$ can be represented as the product of two polynomials in the phase variables
\begin{equation*}
\Phi_{{\cal L}_1}=F_1\cdot F_2,
\end{equation*}
where
\begin{equation}\label{eq_19}
F_1=M_3+\lambda+\varepsilon_1(\beta_1-\alpha_2),\\[3mm]
\end{equation}
\begin{equation}\label{eq_20}
\begin{array}{l}
\ds{F_2=M_3^3+[\varepsilon_1(\beta_1-\alpha_2)+\lambda]M_3^2+[M_1^2+M_2^2+2\varepsilon_1(\alpha_3M_2-\beta_3M_1)]M_3}\\[3mm]
\ds{\qquad+[(M_2^2-M_1^2)(\alpha_2+\beta_1)+2M_1M_2(\alpha_1-\beta_2)]\varepsilon_1+\lambda(M_1^2+M_2^2).}
\end{array}
\end{equation}
\end{propos}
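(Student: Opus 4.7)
The proposition asserts a polynomial identity in the phase variables $(\boldsymbol{M},\boldsymbol{\alpha},\boldsymbol{\beta})$, with $\lambda$ and $\varepsilon_1$ as parameters. Both $\Phi_{\mathcal{L}_1}$ and $F_1\cdot F_2$ are of total degree four in the phase variables, so the natural plan is a direct algebraic verification, organized as an expansion in powers of the deformation parameter $\varepsilon_1$.

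As a useful preliminary I would complete the square on the bracket subtracted in the definition of $\Phi_{\mathcal{L}_1}$:
\begin{equation*}
H^2+2\varepsilon_1^2(\boldsymbol{\alpha}^2+\boldsymbol{\beta}^2)H+\varepsilon_1^4(\boldsymbol{\alpha}^2-\boldsymbol{\beta}^2)^2=\bigl[H+\varepsilon_1^2(\boldsymbol{\alpha}^2+\boldsymbol{\beta}^2)\bigr]^2-4\varepsilon_1^4\boldsymbol{\alpha}^2\boldsymbol{\beta}^2,
\end{equation*}
which recasts $\Phi_{\mathcal{L}_1}$ in the cleaner form
\begin{equation*}
\Phi_{\mathcal{L}_1}=\varepsilon_1^2G+K+\varepsilon_1^4\bigl[\boldsymbol{\alpha}^2\boldsymbol{\beta}^2-(\boldsymbol{\alpha}\cdot\boldsymbol{\beta})^2\bigr]-\tfrac{1}{4}\bigl[H+\varepsilon_1^2(\boldsymbol{\alpha}^2+\boldsymbol{\beta}^2)\bigr]^2.
\end{equation*}
The $\varepsilon_1^4$-contributions from $K$, coming from the $\tfrac{1}{2}\varepsilon_1^2(\boldsymbol{\beta}^2-\boldsymbol{\alpha}^2)$ tail of $Z_1$ and the $-\varepsilon_1^2(\boldsymbol{\alpha}\cdot\boldsymbol{\beta})$ tail of $Z_2$, then cancel the explicit $\varepsilon_1^4$ Casimir quartic, exhibiting $\Phi_{\mathcal{L}_1}$ as a polynomial of degree two in $\varepsilon_1$, consistent with $F_1\cdot F_2$.

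Next I would check the identity order by order in $\varepsilon_1$. At $\varepsilon_1^0$, the classical Kowalevski identity $Z_1^2+Z_2^2|_{\varepsilon_1=0}=\tfrac{1}{4}(M_1^2+M_2^2)^2$ collapses $K-\tfrac{1}{4}H^2|_{\varepsilon_1=0}$ to a multiple of $(M_3+\lambda)^2(M_1^2+M_2^2+M_3^2)$, which factors as $(M_3+\lambda)\cdot(M_3+\lambda)(M_1^2+M_2^2+M_3^2)=F_1^{(0)}\cdot F_2^{(0)}$. At $\varepsilon_1^1$, one collects the cross terms arising from the $2\varepsilon_1(M_2\alpha_3-M_3\alpha_2+M_3\beta_1-M_1\beta_3)$ piece of $H$ inside $H^2$, the $2Z_i^{(0)}Z_i^{(1)}$ products in $K$, and the explicit $2\lambda\varepsilon_1[\alpha_2M_1^2-\beta_1M_2^2-(\alpha_1-\beta_2)M_1M_2]$ bracket in $K$; these must sum to $F_1^{(0)}F_2^{(1)}+F_1^{(1)}F_2^{(0)}$. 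At $\varepsilon_1^2$, the integral $G$ enters for the first time, joining with $(Z_i^{(1)})^2+2Z_i^{(0)}Z_i^{(2)}$ from $K$, the $\lambda\varepsilon_1^2(\boldsymbol{\alpha}^2+\boldsymbol{\beta}^2)M_3-2\lambda\varepsilon_1^2\omega_\gamma$ tail of $K$, and the cross piece $-\tfrac{1}{2}\varepsilon_1^2(\boldsymbol{\alpha}^2+\boldsymbol{\beta}^2)H^{(0)}$ to produce $F_1^{(1)}F_2^{(1)}$.

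The main obstacle is precisely this $\varepsilon_1^2$ stratum: the expansion involves a large number of monomials in $\alpha_i,\beta_j,M_k$, and one must track carefully how the triple-product core $\omega_\alpha^2+\omega_\beta^2+2(M_3+\lambda)\omega_\gamma$ of $G$, together with the $\varepsilon_1(\boldsymbol{\alpha}\cdot\boldsymbol{\beta})$ cross-corrections present in $G$, conspires with the quadratic pieces of $Z_1,Z_2$ to reproduce the factored form. This bookkeeping is most efficiently performed with a symbolic algebra system, but each individual step is elementary, so once the organization above is in place the computation is essentially mechanical.
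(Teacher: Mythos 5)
Your plan---a direct polynomial verification organized as an expansion in powers of $\varepsilon_1$, after completing the square in $H$---is in substance the same argument the paper relies on: the Proposition is stated there with no written proof at all, so the implicit justification is precisely this kind of mechanical check, and your observations about the degree count and the cancellation of the $\varepsilon_1^4$ Casimir terms are a reasonable way to organize it. Two caveats: after noting the $\varepsilon_1^4$ cancellation you still owe a check that the $\varepsilon_1^3$ stratum vanishes (it comes from $2Z_i^{(1)}Z_i^{(2)}$ in $K$, the $\varepsilon_1$-linear part of $G$ inside $\varepsilon_1^2G$, and the cross term $-\tfrac12(\boldsymbol\alpha^2+\boldsymbol\beta^2)H^{(1)}$); only then is $\Phi_{{\cal L}_1}$ quadratic in $\varepsilon_1$ as required for it to match $F_1F_2$.

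The more serious point is that your order-$\varepsilon_1^0$ step fails as written, and it is short enough to do by hand. With $u=M_1^2+M_2^2$ one has $Z_1^2+Z_2^2\big|_{\varepsilon_1=0}=\tfrac14u^2$ and $H\big|_{\varepsilon_1=0}=u+2M_3(M_3+\lambda)$, hence
\begin{equation*}
K-\tfrac14H^2\Big|_{\varepsilon_1=0}=\tfrac14u^2-\lambda(M_3+\lambda)u-\tfrac14\bigl[u+2M_3(M_3+\lambda)\bigr]^2=-(M_3+\lambda)^2(u+M_3^2),
\end{equation*}
while $F_1F_2\big|_{\varepsilon_1=0}=(M_3+\lambda)^2(u+M_3^2)$. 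The ``multiple'' you invoke is therefore $-1$, not $+1$: with the paper's formulas for $H$, $K$, $G$, $F_1$, $F_2$ the identity that actually holds is $\Phi_{{\cal L}_1}=-F_1F_2$ (check: $\lambda=\varepsilon_1=0$, ${\boldsymbol M}=(1,0,1)$ gives $\Phi_{{\cal L}_1}=-2$ but $F_1F_2=2$). Your write-up silently drops this sign so as to land on the Proposition as printed; a literal execution of your order-by-order scheme would refute the stated equality rather than confirm it. The discrepancy is harmless downstream---only the zero sets $F_1=0$, $F_2=0$, the invariant relations and the degeneration locus matter---but a correct proof must either record the overall minus sign or note the misprint in the statement before proceeding to the $\varepsilon_1^1$ and $\varepsilon_1^2$ strata.
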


\begin{remark}
For the Kirchhoff equations on the coalgebra $e(3)^*$ and the Poincare equations on
the coalgebra $so(4)^*$, the additional integral is also represented as a product of two polynomials
\cite{Sok2001,Sok02,Sok03} and \cite{BorMamSok2001}.
\end{remark}

\begin{theorem}
The zero level of each of the functions \eqref{eq_19}, \eqref{eq_20} is an invariant five-dimensional
manifold in $\cal P$.
\end{theorem}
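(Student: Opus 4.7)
The plan is to use the identity $\{H,\Phi_{\mathcal{L}_1}\}=0$, which is automatic from the factorization $\Phi_{\mathcal{L}_1}=F_1F_2$ and the Leibniz rule, to deduce that each factor is preserved along trajectories on its own zero set, and then to verify the dimension by inspecting $dF_1$ and $dF_2$.

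First I would observe that $\Phi_{\mathcal{L}_1}$ is a polynomial in the commuting integrals $H$, $K$, $G$ and the Casimirs, so $\{H,\Phi_{\mathcal{L}_1}\}=0$ on $\mathcal{P}$. Applying the Leibniz rule to $F_1F_2$ gives the identity
\begin{equation*}
F_2\,\{H,F_1\}+F_1\,\{H,F_2\}=0 \qquad \text{on }\mathcal{P}.
\end{equation*}
Restricting to $\{F_1=0\}$ yields $F_2\,\{H,F_1\}=0$, and symmetrically $F_1\,\{H,F_2\}=0$ on $\{F_2=0\}$. To conclude that $\{H,F_1\}$ vanishes identically on $\{F_1=0\}$ (and likewise for $F_2$) it suffices to know that $F_1$ and $F_2$ are coprime in the coordinate ring of $\mathcal{P}$. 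The function $F_1$ is affine linear in $M_3$ with leading coefficient $1$, and since no Casimir involves $M_3$, it is irreducible in the quotient ring defining $\mathcal{P}$. Dividing $F_2$ by $F_1$ as polynomials in $M_3$ produces a nonzero remainder (the obstruction already shows up in a term proportional to $\varepsilon_1\lambda(\alpha_3M_2-\beta_3M_1)$, which cannot be absorbed into the Casimir relations), confirming $F_1\nmid F_2$; symmetrically $F_2\nmid F_1$ by degree. Hence $\{F_1=0\}\cap\{F_2=0\}$ is nowhere dense in either hypersurface, and by continuity $\{H,F_i\}\equiv 0$ on $\{F_i=0\}$ for $i=1,2$, establishing invariance under $X_H$.

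For the dimension, $dF_1=dM_3+\varepsilon_1(d\beta_1-d\alpha_2)$ is nowhere zero and is linearly independent of the three Casimir differentials at every point of $\mathcal{P}$, so $\{F_1=0\}$ is a smooth codimension-one submanifold of $\mathcal{P}$, i.e.\ a smooth $5$-manifold. For $F_2$, the partial derivative $\partial F_2/\partial M_3$ is a quadratic in $M_3$ with leading coefficient $3$, hence not identically zero on $\mathcal{P}$; so $dF_2\neq 0$ on an open dense subset of $\{F_2=0\}$, which is therefore an algebraic hypersurface of dimension~$5$, smooth almost everywhere. The only place I expect genuine work is the coprimality verification: the Poisson computation itself is automatic once the factorization of the Proposition is granted, but the bookkeeping to certify $F_1\nmid F_2$ and $F_2\nmid F_1$ in the quotient ring defining $\mathcal{P}$ is what makes the argument non-tautological.
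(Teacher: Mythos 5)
Your route is genuinely different from the paper's: the paper simply differentiates $F_1$ and $F_2$ along the flow and records the explicit identities $\dot F_1=2\varepsilon_1(\alpha_1+\beta_2)F_1$ and $\dot F_2=-2\varepsilon_1(\alpha_1+\beta_2)F_2$ valid on all of $\mathcal{P}$, from which invariance of both zero levels is immediate (each $F_i$ satisfies a linear ODE along every trajectory, so it cannot leave zero). Your idea of extracting the conclusion from $\{H,F_1F_2\}=0$ plus coprimality is attractive, but as written it has a genuine gap. What you actually establish is only the pointwise statement that $X_HF_i=0$ \emph{on} $\{F_i=0\}$, and that implies invariance only where the zero level is a smooth hypersurface (compare $F=x^2$, $X=\partial_x$ on $\mathbb{R}$: $XF=2x$ vanishes on $\{F=0\}$, yet $\{F=0\}$ is not invariant). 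This matters here because $\{F_2=0\}$ is \emph{not} smooth: every term of $F_2$ in \eqref{eq_20} is at least quadratic in $\boldsymbol M$, so $F_2$ and $dF_2$ both vanish on the three-dimensional set $\{\boldsymbol M=0\}\subset\mathcal{P}$, and your own dimension count only gives smoothness of $\{F_2=0\}$ almost everywhere. Hence invariance of $\{F_2=0\}$ through its singular locus is not established by your argument. (Incidentally, this also shows that the paper's parenthetical claim that zero is a regular value of $F_2$ is not literally true.)

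Two further points. First, the coprimality on which everything rests fails at $\varepsilon_1=0$, where $F_2=(M_3+\lambda)(M_1^2+M_2^2+M_3^2)=F_1\cdot(M_1^2+M_2^2+M_3^2)$; your argument therefore needs $\varepsilon_1\neq0$ as a standing hypothesis. Also, the ``by degree'' justification of $F_2\nmid F_1$ is not the relevant condition: what you need is $\gcd(F_1,F_2)=1$, which does follow from irreducibility of $F_1$ together with $F_1\nmid F_2$, but the density of $\{F_1\neq0\}$ inside the possibly singular set $\{F_2=0\}$ still deserves an explicit argument. Second, the natural repair of the main gap is to push your own divisibility argument one step further in the polynomial (or quotient) ring rather than pointwise on the zero sets: from $F_2\{H,F_1\}+F_1\{H,F_2\}=0$ and $\gcd(F_1,F_2)=1$ you get $F_1\mid\{H,F_1\}$, i.e.\ $\{H,F_1\}=\mu F_1$ for a polynomial cofactor $\mu$, and then $\{H,F_2\}=-\mu F_2$; this is exactly the paper's identity and yields invariance of both zero sets with no smoothness hypothesis whatsoever. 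The paper short-circuits all of this bookkeeping by computing the cofactor $\mu=2\varepsilon_1(\alpha_1+\beta_2)$ directly.
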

\begin{proof} The derivatives of the functions \eqref{eq_19}, \eqref{eq_20}
in virtue of \eqref{eq_1} have the form
\begin{equation}\label{eq_21}
\dot{F_1} = 2\varepsilon_1(\alpha_1+\beta_2) F_1, \quad \dot{F_2}=-2\varepsilon_1(\alpha_1+\beta_2)F_2.
\end{equation}
It is easy to check that zero is a regular value for both functions. Then \eqref{eq_21} yields that each of the equations $F_k=0$ ($k=1,2$) specifies an invariant five-dimensional manifold in $\cal P$.
\end{proof}

Since invariant five-dimensional submanifolds $\{F_k=0\}$ considered separately does not belong completely to the critical set of the momentum mapping ${\cal F}$, we will consider their intersection given by the system of equations
\begin{equation}\label{eq_22}
F_1=0,\quad F_2=0.
\end{equation}
This system specifies an invariant four-dimensional submanifold ${\cal M}_{{\cal L}_1}$ in
$\cal P$ and it is a critical subsystem of the zero level of the integral $\Phi_{{\cal L}_1}$ since
\begin{equation*}
\Phi_{{\cal L}_1}=0,\quad d\Phi_{{\cal L}_1}=F_1dF_2+F_2dF_1=0.
\end{equation*}
According to the general notation, the manifold (\ref{eq_22}) is denoted by ${\cal M}_{{\cal L}_1}$.
\begin{theorem}
The function
\begin{equation*}
\begin{array}{l}
\ds{F_0=2\varepsilon_1\{-(\beta_2+\alpha_1)M_3^2-(\alpha_1+\beta_2)[\varepsilon_1(\beta_1-\alpha_2)+\lambda]M_3-\alpha_1M_1^2-\beta_2M_2^2}\\[3mm]
\ds{\qquad-(\alpha_2+\beta_1)M_1M_2+[\varepsilon_1(2\alpha_1\beta_3-\beta_1\alpha_3-\alpha_2\alpha_3)+\lambda\alpha_3]M_1-}\\[3mm]
\ds{\qquad-[\varepsilon_1(2\beta_2\alpha_3-\beta_1\beta_3-\alpha_2\beta_3)-\lambda\beta_3]M_2\}}
\end{array}
\end{equation*}
is a first integral of the critical subsystem ${\cal M}_{{\cal L}_1}$.
\end{theorem}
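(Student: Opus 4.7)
My plan is to reduce the theorem to a polynomial identity on $\mathcal{P}$. Since $\mathcal{M}_{\mathcal{L}_1}$ is cut out by $F_1=0$ and $F_2=0$ and, by the preceding theorem, the Hamiltonian vector field $X_H$ is tangent to it, it suffices to exhibit polynomial functions $A,B$ on $\mathcal{P}$ such that
\begin{equation*}
\dot{F_0} = A\,F_1 + B\,F_2.
\end{equation*}
Once such an identity is established, restriction to $\mathcal{M}_{\mathcal{L}_1}$ immediately gives $\dot{F_0}=0$, proving that $F_0$ is a first integral of the restricted dynamics.

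To get started, I would compute $\dot{F_0}=\{H,F_0\}$ directly from the Lie--Poisson bracket \eqref{eq_3} and the Hamiltonian \eqref{eq_2} with $\varepsilon_2=0$. A useful preliminary rewriting is to observe that the first two terms of $F_0$ combine into $-2\varepsilon_1(\alpha_1+\beta_2)\,M_3\,F_1$, so that
\begin{equation*}
F_0 = -2\varepsilon_1(\alpha_1+\beta_2)\,M_3\,F_1 + 2\varepsilon_1\,R,
\end{equation*}
where $R$ is a polynomial depending only on $M_1,M_2$ and the components of $\boldsymbol\alpha,\boldsymbol\beta$. By the Leibniz rule and \eqref{eq_21}, the derivative of the first summand automatically lies in the ideal generated by $F_1$; consequently the verification reduces to showing that $2\varepsilon_1\,\dot R$ lies in the ideal $(F_1,F_2)$ modulo the Casimir relations \eqref{eq_4}.

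The remaining step is algebraic division. I would sort the expression for $2\varepsilon_1\,\dot R$ by descending $M_3$-degree and perform Euclidean division with respect to $F_1$, which is linear in $M_3$; the remainder is at most quadratic in $M_3$, and the task becomes to show that, after invoking the Casimirs $\boldsymbol\alpha^2=a^2$, $\boldsymbol\beta^2=b^2$, $\boldsymbol\alpha\cdot\boldsymbol\beta=c$, this remainder equals $B\,F_2$ for an explicit polynomial $B$. The symmetry $M_1\leftrightarrow M_2$, $(\alpha_1,\alpha_2,\alpha_3)\leftrightarrow(\beta_2,\beta_1,\beta_3)$ manifest in $F_1,F_2,F_0$ halves the bookkeeping.

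The main difficulty is algebraic bulk rather than conceptual subtlety: $\{H,F_0\}$ expands into many dozens of monomials in the nine phase variables, and the cancellations that certify membership in $(F_1,F_2)$ only close up once the Casimir relations are used in tandem with the cubic invariant relation $F_2=0$. I would therefore either carry out the division in a carefully ordered fashion by hand, or cross-check the final identity $\dot F_0 = AF_1+BF_2$ symbolically, before concluding that $\dot F_0\equiv 0$ on $\mathcal{M}_{\mathcal{L}_1}$.
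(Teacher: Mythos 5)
Your strategy is sound and is, in substance, the paper's own: the paper disposes of this theorem with the single line ``the proof is by straightforward calculation: $\dot F_0=\{H,F_0\}=0$'', and your plan is simply to organize that calculation. Two remarks are worth making. First, your framing --- proving $\dot F_0=A\,F_1+B\,F_2$ rather than $\{H,F_0\}\equiv 0$ on all of $\cal P$ --- is not merely stylistic but is the correct reading of the statement. Indeed, writing $u=2\varepsilon_1(\alpha_1+\beta_2)$, relations \eqref{eq_21} say $\{H,F_1\}=uF_1$ and $\{H,F_2\}=-uF_2$, so once one also has the identity $\{F_1,F_2\}=F_0$ (recorded in the paper right after this theorem), the Jacobi identity gives
\begin{equation*}
\{H,F_0\}=\{H,\{F_1,F_2\}\}=-F_1\{F_2,u\}-F_2\{F_1,u\},
\end{equation*}
and $\{F_1,u\}=\{M_3,u\}=2\varepsilon_1(\alpha_2-\beta_1)\not\equiv 0$; since $F_1$ does not divide $F_2$ (otherwise ${\cal M}_{{\cal L}_1}$ would be five-dimensional), the bracket genuinely fails to vanish off ${\cal M}_{{\cal L}_1}$ and only lies in the ideal $(F_1,F_2)$. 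Second, that same Jacobi computation already \emph{is} the identity you set out to establish, with $A=-\{F_2,u\}$ and $B=-\{F_1,u\}$, so the ``algebraic bulk'' you anticipate can be bypassed entirely. If you do pursue the division route, note one bookkeeping slip: $F_1$ is monic and linear in $M_3$, so Euclidean division of $\dot R$ by $F_1$ leaves a remainder of $M_3$-degree zero, not ``at most quadratic''; to compare it with the cubic $F_2$ you must first reduce $F_2$ itself modulo $F_1$ and test divisibility by that reduction. As written, your argument is a correct plan whose decisive computation is deferred --- which puts it at essentially the same level of explicitness as the paper's own proof --- but the Jacobi shortcut above closes it completely.
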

The proof is by straightforward calculation: $\dot{F_0}=\{H,F_0\}=0$.

Note that
\begin{equation*}
\{F_1,F_2\}=F_0.
\end{equation*}
According to Lemma \ref{lem1_1}, this implies that the zero level of the integral $F_0$ is the set of points of co-dimension $1$ of degeneration of the $2$-form induced on ${\cal M}_{{\cal L}_1}$ by the symplectic structure of ${\cal P}$. For this set we have
\begin{equation}\label{eq_23}
F_0=0,\quad F_1=0,\quad F_2=0.
\end{equation}
It easily follows from \eqref{eq_23} that the corresponding values of the first integrals are
\begin{equation}\label{eq_24}
g=0,\quad k=\varepsilon_1^4 c^2+ \frac{1}{4}[h^2+2\varepsilon_1^2(a^2+b^2)h+\varepsilon_1^4(a^2-b^2)^2].
\end{equation}
Obviously, the points (\ref{eq_24}) form the tangency line of the surfaces \eqref{eq_13} and \eqref{eq_15}.

Thus, the following theorem holds.
\begin{theorem}
The phase space of the critical subsystem ${\cal M}_{{\cal L}_1}$ specified by relations \eqref{eq_22} is almost everywhere a four-dimensional submanifold in ${\cal P}$. Moreover, the induced dynamical system is almost everywhere Hamiltonian with two degrees of freedom.
The Hamiltonian function $H$ and the function $F_0$ can be taken as independent integrals for this system.
\end{theorem}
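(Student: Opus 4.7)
The plan is to put together three ingredients already prepared in the paper: invariance of each hypersurface $\{F_k=0\}$, Lemma \ref{lem1_1} applied to the pair $(F_1,F_2)$, and the fact that $F_0$ is a first integral.

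First I would verify that the system \eqref{eq_22} cuts out a four-dimensional submanifold off a proper closed subset of $\mathcal{P}$. Since zero is a regular value of each $F_k$ by the preceding theorem, it suffices to check that $dF_1$ and $dF_2$ are linearly independent on an open dense subset of $\mathcal{M}_{{\cal L}_1}$. A convenient way to argue this is to notice that, were $dF_1\wedge dF_2$ to vanish identically on $\mathcal{M}_{{\cal L}_1}$, the Poisson bracket $\{F_1,F_2\}=F_0$ would have to vanish identically there as well; but as shown below $F_0$ is a nontrivial integral of the induced system, so its zero set is a proper subset of $\mathcal{M}_{{\cal L}_1}$. Combined with the invariance relations \eqref{eq_21}, which imply that $X_H$ is tangent to each $\{F_k=0\}$ and hence to their intersection, this gives that $\mathcal{M}_{{\cal L}_1}$ is an invariant four-dimensional submanifold almost everywhere.

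Next, by Lemma \ref{lem1_1} applied to the defining functions $f_1=F_1$, $f_2=F_2$, the $2$-form induced on $\mathcal{M}_{{\cal L}_1}$ by the symplectic structure of $\mathcal{P}$ degenerates exactly on the set $\{F_1,F_2\}=F_0=0$. The relation $\{F_1,F_2\}=F_0$ is the identity already noted in the text, so this set has codimension at least one in $\mathcal{M}_{{\cal L}_1}$ (it suffices to exhibit one point with $F_0\neq 0$ — a direct substitution in a simple configuration does this). Consequently, on the open dense subset $\mathcal{M}_{{\cal L}_1}\setminus\{F_0=0\}$ the induced $2$-form is a genuine symplectic structure, so the restriction of $X_H$ is Hamiltonian with two degrees of freedom.

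Finally, I would exhibit the commuting integrals. The function $H$ restricted to $\mathcal{M}_{{\cal L}_1}$ is of course an integral of the induced Hamiltonian system. The function $F_0$ was shown in the preceding theorem to Poisson-commute with $H$ on all of $\mathcal{P}$, hence also on $\mathcal{M}_{{\cal L}_1}$ with respect to the induced bracket. It remains to verify the functional independence of $H$ and $F_0$ almost everywhere on $\mathcal{M}_{{\cal L}_1}$, which is the main technical obstacle: it requires either computing $dH\wedge dF_0$ restricted to the tangent space to $\mathcal{M}_{{\cal L}_1}$ and exhibiting a point where it is nonzero, or noting that dependence would force the trajectories of $X_H$ on $\mathcal{M}_{{\cal L}_1}$ to lie on level sets of a single function — contradicting the non-constant behaviour of $F_0$ along typical orbits. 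Having established independence, the pair $(H,F_0)$ furnishes the required complete set of commuting integrals for a Liouville-integrable system with two degrees of freedom on the symplectic part of $\mathcal{M}_{{\cal L}_1}$, completing the proof.
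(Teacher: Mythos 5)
Your proposal is correct and follows essentially the same route as the paper, which derives the theorem by assembling the invariance of $\{F_1=0\}$ and $\{F_2=0\}$, the identity $\{F_1,F_2\}=F_0$ together with Lemma \ref{lem1_1}, and the fact that $F_0$ is a first integral commuting with $H$. You are somewhat more explicit than the paper about verifying almost-everywhere independence of $dF_1$, $dF_2$ and of $H$, $F_0$, points which the paper leaves as implicit assertions.
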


Let us emphasize that by virtue of Lemma \ref{lem1_2}, the set ${\cal M}_{{\cal L}_1}$ given by relations \eqref{eq_22} consists of critical points of the momentum mapping $\cal F$.

Recall that $\varepsilon_2=0$. Then we can verify directly that the system of equations
\begin{equation*}
\Phi_{{\cal L}_2}=0,\quad d\Phi_{{\cal L}_2}=0
\end{equation*}
is equivalent to
\begin{equation}\label{eq_25}
M_1=0,\quad M_2=0.
\end{equation}
So the invariant set ${\cal M}_{{\cal L}_2}$ in $\cal P$ is given by the system (\ref{eq_25}). The fact that this system is a system of invariant relations can be easily checked by a simple calculation:
\begin{equation*}
\begin{array}{l}
\dot{M_1}=\{H,M_1\}=-2\varepsilon_1\beta_2M_1+[2M_3+2\lambda+2\varepsilon_1(\beta_1-\alpha_2)+2\varepsilon_1\alpha_2]M_2,\\[3mm]
\dot{M_2}=\{H,M_2\}=[-2M_3-2\lambda-2\varepsilon_1(\beta_1-\alpha_2)+2\varepsilon_1\beta_1]M_1-2\varepsilon_1\alpha_1M_2,\\[3mm]
\{M_1,M_2\}=M_3.
\end{array}
\end{equation*}
Therefore (\ref{eq_25}) implies
\begin{equation*}
\dot{M_1}=\dot{M_2}=0.
\end{equation*}
Obviously, the functions $M_1,M_2$ are independent on $\cal P$. Thus, using Lemma \ref{lem1_2}, we can conclude that the set ${\cal M}_{{\cal L}_2}$ specified by the system of equations \eqref{eq_25} is a smooth four-dimensional invariant submanifold in $\cal P$ and consists of critical points of the momentum mapping ${\cal F}$. The 2-form induced on ${\cal M}_{{\cal L}_2}$ by the symplectic structure of $\cal P$ degenerates on the set $M_3=0$ of codimension 1.
Thus, the following theorem holds.
\begin{theorem}
The phase space of the critical subsystem ${\cal M}_{{\cal L}_2}$ specified by relation \eqref{eq_25} is a four-dimensional submanifold in ${\cal P}$. The induced dynamical system on it is almost everywhere Hamiltonian with two degrees of freedom.
\end{theorem}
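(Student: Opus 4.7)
The plan is to imitate the strategy already used for ${\cal M}_{{\cal L}_1}$, combining a direct Poisson-bracket computation with the two lemmas recalled in the introduction. Four ingredients are needed: invariance of the set $\{M_1=M_2=0\}$, its smoothness and dimension, criticality with respect to the momentum mapping $\cal F$, and control of the degeneracy locus of the induced $2$-form.

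First I would verify that $f_1=M_1$ and $f_2=M_2$ are invariant relations of the restricted flow. Using the Lie--Poisson brackets \eqref{eq_3} and the Hamiltonian \eqref{eq_2} with $\varepsilon_2=0$, one computes $\dot M_1=\{H,M_1\}$ and $\dot M_2=\{H,M_2\}$; every monomial of the resulting expressions is explicitly proportional to either $M_1$ or $M_2$ (this is visible from the formulas for $\dot M_1,\dot M_2$ displayed in the text preceding the theorem), so both vanish on ${\cal M}_{{\cal L}_2}$. This gives condition (ii) of Lemma \ref{lem1_2}.

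Next I would check smoothness and compute the dimension. The phase space $\cal P$ is six-dimensional, being a common level of three Casimirs of \eqref{eq_3}. The differentials $dM_1$ and $dM_2$ are independent of each other and of the differentials of $\boldsymbol\alpha^2,\boldsymbol\beta^2,\boldsymbol\alpha\cdot\boldsymbol\beta$ on all of $\mathbb{R}^9$, so the two equations \eqref{eq_25} cut out a smooth submanifold of codimension $2$ in $\cal P$, i.e.\ of dimension $4$. This is condition (i) of Lemma \ref{lem1_2}. Condition (iii) follows at once from the relation $\{M_1,M_2\}=M_3$ already recorded in the excerpt, since $M_3\not\equiv 0$ on ${\cal M}_{{\cal L}_2}$. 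Applying Lemma \ref{lem1_2} then shows that ${\cal M}_{{\cal L}_2}$ consists of critical points of the momentum mapping.

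Finally I would invoke Lemma \ref{lem1_1} with $f_1=M_1,\ f_2=M_2$ to describe the degeneracy of the induced symplectic form. The set where this form degenerates is exactly $\{M_1=M_2=M_3=0\}\cap{\cal P}$, that is the subset $\{M_3=0\}$ of ${\cal M}_{{\cal L}_2}$, which has codimension $1$. Therefore the induced $2$-form is non-degenerate on a dense open subset of ${\cal M}_{{\cal L}_2}$, giving an almost everywhere Hamiltonian system with two degrees of freedom, as claimed. I do not anticipate a real obstacle here: everything reduces to the routine Poisson-bracket checks above and to citing Lemmas \ref{lem1_1}--\ref{lem1_2}; the only potentially delicate point, namely that $M_3$ is not identically zero on the invariant submanifold, is immediate because $M_3$ is an independent coordinate transverse to the defining relations \eqref{eq_25}.
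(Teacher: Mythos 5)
Your proposal is correct and follows essentially the same route as the paper: direct verification that $\dot M_1$ and $\dot M_2$ vanish on $\{M_1=M_2=0\}$, independence of $M_1,M_2$ on $\cal P$, Lemma~\ref{lem1_2} for criticality with respect to $\cal F$, and Lemma~\ref{lem1_1} with $\{M_1,M_2\}=M_3$ to locate the codimension-one degeneracy set $\{M_3=0\}$. No gaps; this matches the argument given in the text surrounding the theorem.
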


\section{Applications}
Here we show how the systems of invariant relations \eqref{eq_22} and \eqref{eq_25} can be used to determine the type of a critical point $x_0$ of range $2$ in the integrable system with three degrees of freedom in the sense of definition \cite{BolFom}.
Here we follow the scheme suggested in \cite{KhRyabSm2011,KhRCDNew}. In particular, the type of a critical point of an integrable system gives the complete information about the stability of a trajectory passing through this point.

Consider the first integral $\Phi_{\cal L}$ such that it is regular in the neighbourhood of a point $x_0\in{\cal M}_{\cal L}$ except for $x_0$ itself, $d\Phi_{\cal L}(x_0) = 0$. In this case the point $x_0$ appears to be fixed for the Hamiltonian field $\sgrad \Phi_{\cal L}$, and we can find a linearization of this field at the point $x_0$. This linearization is the symplectic operator  $A_{\Phi_{\cal L}}$ in the six-dimensional space that is tangent to the phase space at the point $x_0$. This operator has four zero eigenvalues and the remaining factor of the characteristic polynomial has the form $\mu^2 - C_{\Phi_{\cal L}}$. If $C_{\Phi_{\cal L}} < 0$, we get the point of ``center'' type (the corresponding two-dimensional torus is elliptic and is a stable manifold in the phase space, it is a limit of concentric family of three-dimensional regular tori). If $C_F > 0$, we get the ``saddle'' type point (the corresponding two-dimensional torus is hyperbolic and there are trajectories that are asymptotic to this torus and lie on the three-dimensional separatrix surfaces).

In our problem, the situation is more complicated due to the fact that the phase space is specified in ${\mathbb R}^9$ by three implicit equations \eqref{eq_4} and it is rather difficult to calculate the restrictions of the operators to the tangent surfaces. However, the functions in the left-hand sides of equations \eqref{eq_4} are the Casimir functions for the natural extension to ${\mathbb R}^9$ of the Poisson bracket for the symplectic structure of the space ${\cal P}$. Consequently, when calculating the symplectic operator of the form $A_{\Phi_{\cal L}}$, they will add three zero roots to the characteristic polynomial which has the ninth degree. Thus, we know in advance that under the condition $\sgrad \Phi_{\cal L} = 0$ the required coefficient $C_{\Phi_{\cal L}}$ is a coefficient at $\mu^7$ in the characteristic polynomial $Z_{\Phi_{\cal L}}(\mu)$ of the operator $A_{\Phi_{\cal L}}$ in ${\mathbb R}^9$. Calculating the characteristic polynomial according to the method suggested in \cite{KhRyabSm2011,KhRCDNew} we obtain
\begin{equation*}
Z_{\Phi_{\cal L}} (\mu) = -\mu^7(\mu^2 - C_{\Phi_{\cal L}} ),
\end{equation*}
where
\begin{equation*}
C_{\Phi_{\cal L}}=\frac{1}{2}\trace(A_{\Phi_{\cal L}}^2).
\end{equation*}
Note that the operator $A_{\Phi_{\cal L}}$ is well defined even for for degenerate Poisson bracket, so it is calculated in the space ${\mathbb R}^9$ in the bracket \eqref{eq_3}.

\begin{theorem}
At the points of the critical subsystem ${\cal M}_{{\cal L}_1}$ the coefficient of the characteristic polynomial $C_{\Phi_{{\cal L}_1}}$ is specified by the formula
\begin{equation*}
C_{\Phi_{{\cal L}_1}}=f_0^2,
\end{equation*}
where $f_0$ is the constant of the additional integral $F_0$.
\end{theorem}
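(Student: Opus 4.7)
The plan is to exploit the factorization $\Phi_{{\cal L}_1}=F_1F_2$. Since both factors vanish on ${\cal M}_{{\cal L}_1}$, the linearization of $\sgrad \Phi_{{\cal L}_1}$ at a point $x_0\in{\cal M}_{{\cal L}_1}$ degenerates to a rank-$2$ operator whose square can be evaluated by a single Poisson-bracket identity, namely $\{F_1,F_2\}=F_0$ from the discussion preceding the theorem.

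First, I would apply the Leibniz rule for $\sgrad$ with respect to the Lie--Poisson bracket \eqref{eq_3} to write
\begin{equation*}
\sgrad \Phi_{{\cal L}_1} = F_1\,\sgrad F_2 + F_2\,\sgrad F_1.
\end{equation*}
At $x_0$ we have $F_1(x_0)=F_2(x_0)=0$, so when linearizing this vector field only the contributions from differentiating the prefactors survive. Writing $B=\sgrad F_2\otimes dF_1$ and $C=\sgrad F_1\otimes dF_2$ for the corresponding rank-$1$ operators on $\mathbb{R}^9$, the linearization is
\begin{equation*}
A_{\Phi_{{\cal L}_1}} = B+C,
\end{equation*}
a rank-$2$ operator whose image lies in $\mathrm{span}(\sgrad F_1,\sgrad F_2)$. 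This already accounts for seven of the nine zero eigenvalues of $Z_{\Phi_{{\cal L}_1}}(\mu)$, consistently with the factorization $-\mu^7(\mu^2-C_{\Phi_{{\cal L}_1}})$ described just above the theorem.

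Next, I would square $A_{\Phi_{{\cal L}_1}}$ using the composition law $(a\otimes\alpha)\circ(b\otimes\beta)=\alpha(b)\,(a\otimes\beta)$ and the pairings $dF_i(\sgrad F_j)=\{F_j,F_i\}$. The cross terms $BC$ and $CB$ drop out, because each produces a self-bracket $\{F_i,F_i\}=0$, while $B^2$ and $C^2$ contribute $\{F_2,F_1\}=-F_0$ and $\{F_1,F_2\}=F_0$ respectively, giving
\begin{equation*}
A_{\Phi_{{\cal L}_1}}^2 = F_0\bigl(\sgrad F_1\otimes dF_2 - \sgrad F_2\otimes dF_1\bigr).
\end{equation*}
Taking trace via $\trace(a\otimes\alpha)=\alpha(a)$ yields $\trace A_{\Phi_{{\cal L}_1}}^2 = F_0\bigl(\{F_1,F_2\}-\{F_2,F_1\}\bigr) = 2F_0^2$, so $C_{\Phi_{{\cal L}_1}}=\tfrac{1}{2}\trace A_{\Phi_{{\cal L}_1}}^2 = F_0^2$, which at $x_0$ equals $f_0^2$.

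The only real obstacle is conceptual rather than computational: I have to make sure that the Leibniz identity, the relation $\{F_1,F_2\}=F_0$, and the factorization $-\mu^7(\mu^2-C_{\Phi_{{\cal L}_1}})$ are all valid for the degenerate Poisson bracket on $\mathbb{R}^9$ rather than on ${\cal P}$. The first two are formal consequences of \eqref{eq_3}, and the third follows because the three Casimirs $\boldsymbol\alpha^2$, $\boldsymbol\alpha\cdot\boldsymbol\beta$, $\boldsymbol\beta^2$ contribute exactly the three extra zero roots already absorbed into the $\mu^7$ factor, as noted by the authors. Once these points are checked, the identity $C_{\Phi_{{\cal L}_1}}=f_0^2$ follows from purely tensorial manipulations.
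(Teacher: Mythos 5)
Your argument is correct, and it reaches the conclusion by a genuinely different and more structural route than the paper. The paper gives no written proof: it sets up the general formula $C_{\Phi_{\cal L}}=\tfrac12\trace(A_{\Phi_{\cal L}}^2)$ and then asserts the result as the output of the symbolic computation of this trace for the $9\times 9$ linearization in the bracket \eqref{eq_3}, following the method of \cite{KhRyabSm2011,KhRCDNew}. You instead exploit the factorization $\Phi_{{\cal L}_1}=F_1F_2$: since $F_1(x_0)=F_2(x_0)=0$, the linearization collapses to the rank-two operator $A=\sgrad F_2\otimes dF_1+\sgrad F_1\otimes dF_2$, and the single identity $\{F_1,F_2\}=F_0$ (already recorded in the paper) gives $\trace(A^2)=2F_0^2$; the sign convention for $dF_i(\sgrad F_j)$ is immaterial because the answer is a square. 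Your computation even shows slightly more than the trace formula requires, namely that $\sgrad F_1$ and $\sgrad F_2$ are eigenvectors of $A$ with eigenvalues $\pm F_0$, which immediately yields the factor $\mu^2-f_0^2$ and explains conceptually why $C_{\Phi_{{\cal L}_1}}\geq 0$, i.e.\ why these tori are always hyperbolic or degenerate. Your checks of the prerequisites (Leibniz rule for $\sgrad$, validity on $\mathbb{R}^9$ with the degenerate bracket, the three Casimirs accounting for three of the seven zero roots) are exactly the points that need attention, and they all hold. What the paper's brute-force method buys is uniformity: it applies equally to ${\cal M}_{{\cal L}_2}$, where $\Phi_{{\cal L}_2}$ is not presented as a product of two functions vanishing on the subsystem, so your tensorial shortcut would not transfer there without modification.
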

Thus, any two-dimensional torus $\{({\boldsymbol M},{\boldsymbol \alpha},{\boldsymbol\beta})\in{\cal M}_{\cal L} :H=h, F_0=f_0\}$ has the hyperbolic type except for the zero value of the additional integral $F_0$ when the corresponding critical points of rank 2 become degenerate.

\begin{theorem}
At the points of the critical subsystem ${\cal M}_{{\cal L}_2}$ the coefficient of the characteristic polynomial $C_{\Phi_{{\cal L}_2}}$ is specified by the following expression:
\begin{equation*}
C_{\Phi_{{\cal L}_2}}=16h\{h[\varepsilon_1^2(a^2+b^2)-\lambda^2]-2\varepsilon_1^2g\}.
\end{equation*}
\end{theorem}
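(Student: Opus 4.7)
The plan is to mimic exactly the scheme applied to $\Phi_{{\cal L}_1}$ in the previous theorem: form the linearization $A_{\Phi_{{\cal L}_2}}$ of the Hamiltonian vector field $\sgrad\Phi_{{\cal L}_2}$ at an arbitrary point $x_0\in{\cal M}_{{\cal L}_2}$ in the ambient ${\mathbb R}^9$ equipped with the Lie--Poisson tensor $\Pi(x_0)$ from \eqref{eq_3}, and then evaluate $C_{\Phi_{{\cal L}_2}}=\frac{1}{2}\trace(A_{\Phi_{{\cal L}_2}}^2)$. The preparatory observation is that, as shown just before equation \eqref{eq_25}, the system $\Phi_{{\cal L}_2}=0,\ d\Phi_{{\cal L}_2}=0$ is equivalent to $M_1=M_2=0$; in particular $d\Phi_{{\cal L}_2}(x_0)=0$, so $\sgrad\Phi_{{\cal L}_2}(x_0)=0$ and $A_{\Phi_{{\cal L}_2}}=\Pi(x_0)\cdot\mathrm{Hess}_{x_0}(\Phi_{{\cal L}_2})$ is unambiguously defined.

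First, I would write out $\Phi_{{\cal L}_2}=4K-\varepsilon_1^2\{2H(\boldsymbol\alpha^2+\boldsymbol\beta^2)-4G+\varepsilon_1^2[4(\boldsymbol\alpha\cdot\boldsymbol\beta)^2+(\boldsymbol\alpha^2-\boldsymbol\beta^2)^2]\}$ in the nine coordinates $(M_i,\alpha_i,\beta_i)$ and compute the $9\times 9$ Hessian at $x_0$. Only those monomials of $K,G,H$ that are \emph{bilinear in the directions transverse to} $\{M_1=M_2=0\}$ contribute a nonzero block after composing with $\Pi(x_0)$ and squaring, because the Poisson tensor at $x_0$ maps the $M_1$-, $M_2$-rows into the tangent to ${\cal M}_{{\cal L}_2}$ via $\{M_1,M_2\}=M_3$, $\{M_1,\alpha_j\},\{M_1,\beta_j\}$, etc. In particular, using $M_1=M_2=0$ kills a large number of Hessian entries and reduces the relevant block to one controlled essentially by the $M_1^2+M_2^2$ part of $H$, the $Z_1^2+Z_2^2-\lambda(M_3+\lambda)(M_1^2+M_2^2)$ part of $K$, and the $\omega_\alpha^2+\omega_\beta^2$ part of $G$.

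Next, I would assemble $A_{\Phi_{{\cal L}_2}}$ as a block matrix with rows/columns split into the $(M_1,M_2)$-block, the $M_3$-row/column (which contributes trivially because $\{M_3,M_3\}=0$ and $d M_3\!\cdot\!\sgrad\Phi_{{\cal L}_2}=0$ on ${\cal M}_{{\cal L}_2}$), and the six-dimensional $(\alpha,\beta)$-block. Squaring $A_{\Phi_{{\cal L}_2}}$ and taking the trace, all terms depending on $\alpha_3,\beta_3$ individually combine via the Casimir relations $\boldsymbol\alpha^2=a^2,\ \boldsymbol\beta^2=b^2,\ \boldsymbol\alpha\cdot\boldsymbol\beta=c$ into the symmetric combinations $a^2+b^2$ and (through $\omega_\alpha^2+\omega_\beta^2$ evaluated at $M_1=M_2=0$) into $M_3^2(\alpha_3^2+\beta_3^2)$, which on ${\cal M}_{{\cal L}_2}$ equals $G-2(M_3+\lambda)\omega_\gamma$-type pieces that rewrite cleanly in terms of the integral constants. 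Finally, one substitutes $H(x_0)=h$, $G(x_0)=g$ and the identities $M_1^2+M_2^2=0$ to collect everything into $16h\{h[\varepsilon_1^2(a^2+b^2)-\lambda^2]-2\varepsilon_1^2g\}$.

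The hard part is neither conceptual nor a new idea --- it is the purely mechanical but bulky simplification of a $9\times 9$ trace with nine Hessian entries of a degree-four polynomial, and the algebraic verification that, after using $M_1=M_2=0$ and the three Casimir relations, the result depends only on $h$ and $g$ and collapses to the stated compact expression. In practice this is best carried out with a computer algebra system, exactly as advertised in the passage preceding the theorem referring to the method of \cite{KhRyabSm2011,KhRCDNew}; the key sanity checks are (i) $H$ and $G$ reappear only through their numerical values $h,g$ (no residual dependence on $M_3,\alpha_i,\beta_i$), and (ii) the factor $16h$ in front correctly signals that $h=0$ is a locus of degeneration of the induced symplectic form, consistent with the degeneracy set $M_3=0$ identified right after \eqref{eq_25} since $H|_{M_1=M_2=0}=2M_3^2+2\lambda M_3+\cdots$ vanishes in tandem with $M_3$ under natural choices of the remaining parameters.
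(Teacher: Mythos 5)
Your proposal follows exactly the scheme the paper itself uses (and attributes to the method of Kharlamov--Ryabov--Smirnov): linearize $\sgrad\Phi_{{\cal L}_2}$ at a point of ${\cal M}_{{\cal L}_2}$ in ${\mathbb R}^9$ with the Lie--Poisson tensor of \eqref{eq_3}, then evaluate $C_{\Phi_{{\cal L}_2}}=\frac{1}{2}\trace(A_{\Phi_{{\cal L}_2}}^2)$ after imposing $M_1=M_2=0$ and the Casimir relations \eqref{eq_4} --- the paper offers no more detail than this direct computation, so your outline is essentially its proof. The only caveat is that your closing ``sanity check'' conflates the degeneration locus of the induced $2$-form (the set $M_3=0$) with the degeneracy locus of the rank-$2$ critical points (where $C_{\Phi_{{\cal L}_2}}=0$, i.e.\ $h=0$ or $h=h_*$); these are distinct phenomena, but this side remark does not affect the validity of the computation.
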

The coefficient $C_{\Phi_{{\cal L}_2}}$ vanishes in the preimage of the tangency line of the surfaces \eqref{eq_14} and \eqref{eq_16} with $\varepsilon_2=0$ in the equation of the surface \eqref{eq_16}.
Denote
\begin{equation*}
  \ds h_*= \frac{2\varepsilon_1^2g}{\varepsilon_1^2(a^2+b^2)-\lambda^2}.
\end{equation*}
Then, the points of the critical subsystem ${\cal M}_{{\cal L}_2}$ have the elliptic type for $h\in(0,h_*)$ and the hyperbolic type  $h\notin[0,h_*]$. For the boundary values $h=0$ and $h=h_*$ the points on the correspondent two-dimensional critical tori are degenerate as the rank 2 critical points of the initial system.

\section{Conclusion}

In the paper, for the problem of the generalized two-field gyrostat motion under gyroscopic forces without linear potential we have found new critical subsystems ${\cal M}_{{\cal L}_1}$ and ${\cal M}_{{\cal L}_2}$ which are almost everywhere smooth four-dimensional manifolds. These subsystems are described in two ways. First, they are defined as the sets of critical points of the integral mapping lying on the zero level of some naturally arising general first integral. Second, the phase spaces of the critical subsystems are described by the pair of invariant relations in \eqref{eq_22} and \eqref{eq_25}. For each critical subsystem, the obtained integral and the system of invariant relations provide a way
to explicitly calculate the type of the corresponding critical points with respect to the initial integrable system with three degrees of freedom.

If the deformation parameters $\varepsilon_1$ and $\varepsilon_2$ are different from zero, but the gyrostatic momentum $\lambda$ vanishes, the bifurcation surfaces specified by \eqref{eq_17} and \eqref{eq_18} also give rise to almost everywhere invariant four-dimensional submanifolds. In \cite{Rya13} the explicit equations of these four-dimensional submanifolds are suggested. The problem of determining the invariant relations corresponding to the parametric surface \eqref{eq_16} for which the deformation parameters $\varepsilon_1$, $\varepsilon_2$ and the parameter of the gyrostatic momentum $\lambda$ are different from zero still remains unsolved, so the types of the critical points in this case are not completely established.

The author thanks Professor M.\,P.\,Kharlamov for valuable discussions and constant help in preparing this article.

\end{document}